\documentclass[letterpaper, 10 pt, conference]{ieeeconf}  

\IEEEoverridecommandlockouts                              
\usepackage[compress]{cite}
\usepackage{graphicx}
\usepackage{amsmath}
\usepackage{amsfonts}
\usepackage{amssymb}
\usepackage{cases}

\usepackage{amsthm}
\usepackage{hyperref}

\newcommand{\lf}{\left}
\newcommand{\rg}{\right}
\DeclareMathOperator*{\argmax}{\arg\max}

\newtheorem{theorem}{Theorem}
\newtheorem{lemma}{Lemma}

\theoremstyle{definition}
\newtheorem{definition}{Definition}
\newtheorem*{notation}{Notation}
\newtheorem{assumption}{Assumption}
\newtheorem*{example}{Counter-Example}

\theoremstyle{remark}

\newcommand{\bt}{\beta}
\newcommand{\ep}{\varepsilon}
\newcommand{\ta}{\tau}
\newcommand{\ka}{\kappa}
\newcommand{\sg}{\sigma}
\newcommand{\Sg}{\Sigma}

\newcommand{\R}{\mathbb{R}}

\newcommand{\Rn}{\R^n}
\newcommand{\Ru}{\R^p}
\newcommand{\Rd}{\R^q}

\newcommand{\sbs}{\subset}
\newcommand{\tms}{\times}
\newcommand{\ii}{\infty}

\newcommand{\xSig}{\mathbf{x}}
\newcommand{\traj}[2]{\xSig_{#1}^{#2}}

\newcommand{\uVals}{\mathcal{U}}
\newcommand{\dVals}{\mathcal{D}}
\newcommand{\uSig}{\mathbf{u}}
\newcommand{\dSig}{\mathbf{d}}
\newcommand{\uSigs}[1]{\mathbb{U}_{#1}}
\newcommand{\dSigs}[1]{\mathbb{D}_{#1}}
\newcommand{\dStrat}{\sg}
\newcommand{\dStrats}[1]{\Sg_{#1}}

\newcommand{\vSig}{\mathbf{v}}
\newcommand{\wSig}{\mathbf{w}}

\newcommand{\standardTrajectory}{\traj{x,t}{\uSig,\dStrat}}
\newcommand{\dotStandardTrajectory}{\dot{\xSig}_{x,t}^{\uSig,\dStrat}}

\newcommand{\firstStrat}{\alpha}
\newcommand{\secondStrat}[1]{\beta_{#1}}
\newcommand{\composedStrat}{\dStrat}

\newcommand{\switchTime}[1]{s_{#1}}
\newcommand{\switchState}[1]{z_{#1}}
\newcommand{\switchSet}[1]{\mathcal{S}_{#1}}

\newcommand{\firstTraj}[1]{\traj{x,t}{#1,\firstStrat}}
\newcommand{\secondTraj}[1]{\traj{z,s}{#1,\secondStrat{z,s}}}
\newcommand{\composedTraj}[1]{\traj{x,t}{#1,\composedStrat}}
\newcommand{\barTraj}{\traj{z,s}{\bar{\vSig},\bar{\firstStrat}}}

\newcommand{\prmStrat}[1]{\bt_{#1}^1}
\newcommand{\altStrat}[1]{\bt_{#1}^2}

\newcommand{\prmSwitchTime}[1]{s_{#1}^1}
\newcommand{\prmSwitchState}[1]{z_{#1}^1}
\newcommand{\prmSwitchSet}[1]{\mathcal{S}_{#1}^1}

\newcommand{\altSwitchTime}[1]{s_{#1}^2}
\newcommand{\altSwitchState}[1]{z_{#1}^2}
\newcommand{\altSwitchSet}[1]{\mathcal{S}_{#1}^2}

\newcommand{\tar}{r}
\newcommand{\obs}{q}
\newcommand{\prm}{r_1}
\newcommand{\alt}{r_2}

\newcommand{\tarEval}[2]{\tar \lf( #1(#2), #2 \rg)}
\newcommand{\obsEval}[2]{\obs \lf( #1(#2), #2 \rg)}
\newcommand{\prmEval}[2]{\prm \lf( #1(#2), #2 \rg)}
\newcommand{\altEval}[2]{\alt \lf( #1(#2), #2 \rg)}

\newcommand{\specialTar}{\tilde{\tar}}
\newcommand{\specialTarEval}[2]{\specialTar \lf( #1(#2), #2 \rg)}

\newcommand{\supu}[1]{\sup_{\uSig \in \uSigs{#1}}}
\newcommand{\infd}[1]{\inf_{\dStrat \in \dStrats{#1}}}

\newcommand{\JA}[2]{J_{#1}^{\mathrm{A}}[#2]}
\newcommand{\JR}[2]{J_{#1}^{\mathrm{R}}[#2]}
\newcommand{\JRA}[2]{J_{#1}^{\mathrm{RA}}[#2]}
\newcommand{\JRAA}[2]{J_{#1}^{\mathrm{RAA}}[#2]}
\newcommand{\JRR}[2]{J_{#1}^{\mathrm{RR}}[#2]}

\newcommand{\VA}[1]{V_{\mathrm{A}}[#1]}
\newcommand{\VR}[1]{V_{\mathrm{R}}[#1]}
\newcommand{\VRA}[1]{V_{\mathrm{RA}}[#1]}
\newcommand{\VRAA}[1]{V_{\mathrm{RAA}}[#1]}
\newcommand{\VRR}[1]{V_{\mathrm{RR}}[#1]}

\newcommand{\VAEval}[3]{\VA{#1}\lf(#2(#3),#3 \rg)}
\newcommand{\VREval}[3]{\VR{#1}\lf(#2(#3),#3 \rg)}

\newcommand{\targetSet}{\mathcal{R}}
\newcommand{\safeSet}{\mathcal{Q}}

\newcommand{\by}[1]{\tag*{by #1}}

\title{\LARGE \bf
Exact Decomposition of Adversarial Dual-Objective Value Functions, \\
with Applications to Optimal Drug Dosing
}

\author{Dylan Hirsch$^*$, William Sharpless$^*$, and Sylvia Herbert
\thanks{Research reported in this publication was supported by the National Institutes of Health under award number T32EB009380. The content is solely the responsibility of the authors and does not necessarily represent the official views of the National Institutes of Health.}
\thanks{Dylan Hirsch (corresponding author), William Sharpless, and Sylvia Herbert are with the Department of Mechanical and Aerospace Engineering, University of California at San Diego, 9500 Gilman Drive MC 0411, La Jolla, CA 92093.
        {\tt\small dhirsch@ucsd.edu, wsharpless@ucsd.edu, sherbert@ucsd.edu.}}%
\thanks{*Equal contribution.}
}

\begin{document}

\maketitle
\thispagestyle{empty}
\pagestyle{empty}

\begin{abstract}
Hamilton-Jacobi Reachability (HJR) is a central framework in safe control theory.
While HJR has traditionally focused on a few fundamental tasks, there is increasing interest in scaling to more complex objectives. Recent works have studied the exact decomposition of the value functions for two fundamental dual-objective tasks in the adversary-free setting.
However, not all value function decompositions in HJR remain valid with an adversary.
In this work, we develop theoretical approaches to certify that for these two composite value functions, the proposed decompositions still hold with an adversary.
Finally, we show how these results can solve issues that arise when applying HJR to optimal drug regimen design.
\end{abstract}

\section{INTRODUCTION}
Hamilton-Jacobi Reachability (HJR) is a theoretical and algorithmic framework for the control of safety-critical systems with nonlinear dynamics \cite{HJR-Survey,Mitchell-Tomlin-TAC-HJR-2005,Margellos-Lygeros-TAC-Reach-Avoid-2011,Fisac-Sastry-Reach-Avoid-2015}.
Methodologically, HJR computes a value function which encodes the optimal controller for a given safety or reachability task.
The sign of the value function indicates whether the task can be satisfied, and its magnitude quantifies the extent of success or failure (e.g. the distance by which a target will be missed).

Three standard tasks in HJR are goal-reaching (``reach'') \cite{Mitchell-Tomlin-TAC-HJR-2005}, obstacle-avoidance (``avoid'') \cite{Mitchell-Tomlin-TAC-HJR-2005}, and obstacle-avoidance until goal-reaching (``reach-avoid'') \cite{Margellos-Lygeros-TAC-Reach-Avoid-2011,Fisac-Sastry-Reach-Avoid-2015}.
Each of these tasks has a corresponding value function, and critically each of these value functions can be characterized as the viscosity solution of a certain Hamilton-Jacobi partial differential equation (HJ-PDE) \cite{Evans-Souganidis-Differential-Games-Representation-Formulas-1984,Bardi-Dolcetta-Optimal-Control}, facilitating computation.

Active research in HJR includes scaling towards more complex tasks, including dual-objective \cite{Sharpless-ICLR}, reach-avoid-stay \cite{Zheng-RAS}, sequential reach-avoid \cite{Xiang-CDC-2025,chen2025controlsynthesismultiplereachavoid}, and broader signal temporal logic specifications \cite{Chen-2020-STL-Meets-Reachability,Lars,Jiang-2020,Jiang-2024}. Recently, it was demonstrated that for a broad class of specifications, the HJR value can be decomposed into a graph of values in the single-player case \cite{Sharpless-RSS}.

The work \cite{Sharpless-ICLR} in particular focuses on two fundamental dual-objective tasks: the reach-always-avoid (RAA) and reach-reach (RR) problems.
The RAA problem is similar to the standard reach-avoid problem, except that it requires the system to both reach its target without hitting the obstacle \textit{and also avoid the obstacle thereafter}.
The RR problem requires the system to reach two targets in either order.

While the standard reach, avoid, and reach-avoid problems include an adversarial player to induce robustness to disturbances, the value function decompositions in \cite{Sharpless-ICLR} were only studied in the adversary-free case, and their proofs rely critically on the ability to freely exchange the optimization over control signals with operations in the payoff functional.
In the adversarial setting, such exchanges are no longer valid due to the value function's minimax structure, which arises from the underlying game.
For this reason, other seemingly natural decompositions no longer hold 
(see section \ref{subsection:counter-example} for an intuitive counter-example).

In this work, we show that despite this obstruction, the particular decompositions introduced in \cite{Sharpless-ICLR} still hold in the adversarial case, establishing fundamental approaches for value function decomposition with an adversary.
Moreover, whereas this previous work studies a discrete-time setting, we establish these results in continuous time, where the notion of non-anticipative strategies is central \cite{elliott-kalton,Evans-Souganidis-Differential-Games-Representation-Formulas-1984}.
In proving the decompositions for these dual-objective problems, we establish fundamental approaches for value function decomposition with an adversary.

In terms of applications, while the RAA and RR problems have traditionally been explored for robotics and reinforcement learning, they also have utility more broadly.
In particular, the RAA problem is often the appropriate task for optimal dosing regimen design, in which toxic concentrations of a pharmaceutical must not occur even after their primary therapeutic goal has been achieved.
We thus provide examples demonstrating how to use our decomposition results toward such biomedical applications.

\textbf{Contributions:} (i) we theoretically certify the exactness of a certain decomposition of the RAA value function in the presence of an adversary for the continuous time, finite-horizon setting, (ii) we provide an analogous result for the RR problem, and (iii) we demonstrate via examples how to apply these results for optimal dose regimen design.

\textbf{Structure:} In section \ref{section:background}, we describe the setup of HJR.
In section \ref{section:raa-rr}, we introduce the adversarial RAA and RR problems.
In section \ref{section:theory}, we provide the decomposition theorems.
Finally, in section \ref{section:examples}, we demonstrate the utility of our theory to optimal dosing problems.

\section{HJR Setup and Background}\label{section:background}

\begin{notation}
We use the variable $x$ to represent a point in the state space, i.e. $x \in \Rn$, and we use the variable $\xSig$ to represent a state trajectory, i.e. $\xSig: [t,T] \to \Rn$.
We similarly use $u$ ($d$) for an element of the control (disturbance) space and $\uSig$ ($\dSig$) for a control (disturbance) signal.
\end{notation}

\subsection{Setup}
We consider a system with dynamics
\begin{equation}\label{eqn:dynamics}
\dot{\xSig}(\ta) = f(\xSig(\ta), \uSig(\ta), \dSig(\ta), \ta),
\end{equation}
where $f:\Rn \tms \uVals \tms \dVals \tms \R \to \Rn$, with $\uVals \sbs \Ru$ and $\dVals \sbs \Rd$.
Here, $\ta$ represents time, $\xSig$ the state, $\uSig$ the control, and $\dSig$ the disturbance. We assume the following.

\begin{assumption}\label{assumption:regularity}
    The function $f$ is continuous, and there exists some $K > 0$ such that $\|f(x,u,d,\ta)\| \le K(1 + \|x\|)$ for each $u \in \uVals$, $d \in \dVals$, and $\ta \in \R$.
    Moreover, for each $R > 0$ there exists an $L > 0$ such that $\|f(x_1,u,d,\ta) - f(x_2,u,d,\ta)\| \le L \|x_1 - x_2\|$ for all $x_1,x_2 \in \Rn$ satisfying $\max\{\|x_1\|,\|x_2\|\} \le R$ and for all $u \in \uVals$, $d \in \dVals$, and $\ta \in \R$.
\end{assumption}
\begin{assumption}\label{assumption:compactness}
    The sets $\uVals$ and $\dVals$ are compact.
\end{assumption}
We fix $T \in \R$, representing the final time.
For each initial time $t < T$, we denote by $\uSigs{t}$ the set of all measurable control signals $\uSig:[t,T) \to \uVals$ and by $\dSigs{t}$ the set of all measurable disturbance signals $\dSig:[t,T) \to \dVals$.
Conceptually, we can think of any map $\dStrat:\uSigs{t} \to \dSigs{t}$ as representing a strategy for an adversarial player that chooses the disturbance input $\dSig$ based upon the control input $\uSig$.
In HJR, we make the restriction that the adversary's strategy cannot use future information to inform current decisions.
We make this requirement precise via the notion of non-anticipativity \cite{elliott-kalton}.
\begin{definition}[Non-Anticipativity]
Let $t < T$. 
An adversary strategy $\dStrat :\uSigs{t} \to \dSigs{t}$ is \textbf{non-anticipative} if for each $s \in [t,T)$ and $\uSig_1,\uSig_2 \in \uSigs{t}$ such that $\uSig_1(\ta) = \uSig_2(\ta)$ for a.e. $\ta \in [t,s)$, we also have $\dStrat[\uSig_1](\ta) = \dStrat[\uSig_2](\ta)$ for a.e. $\ta \in [t,s)$.
\end{definition}
\noindent For each $t < T$, we denote by $\dStrats{t}$ the set of all non-anticipative $\dStrat: \uSigs{t} \to \dSigs{t}$.

For each $x \in \Rn$, $t < T$, $\uSig \in \uSigs{t}$, and $\dStrat \in \dStrats{t}$,
we let $\standardTrajectory:[t,T] \to \Rn$ be the Carath\'{e}odory solution of \eqref{eqn:dynamics} under the initial condition $\xSig(t) = x$ and the disturbance signal $\dSig = \dStrat[\uSig]$.
More explicitly, $\standardTrajectory$ is defined to be the unique absolutely continuous function from $[t,T]$ into $\Rn$ for which $\standardTrajectory(t) = x$ and $\dotStandardTrajectory(\ta) = f( \standardTrajectory(\ta), \uSig(\ta), \dStrat[\uSig](\ta), \ta)$ for a.e. $\ta \in (t,T)$.
Note that the existence and uniqueness of this solution follow from Assumptions \ref{assumption:regularity} and \ref{assumption:compactness} \cite{Friedman-Differential-Games}.
Conceptually, $\standardTrajectory$ represents the state trajectory that results from a control signal $\uSig$ and a non-anticipative adversary strategy $\dStrat$, when the system begins in state $x$ at time $t$.

\subsection{The reach, avoid, and reach-avoid tasks in HJR}
We will be concerned with three following qualitative tasks from HJR.
In the following, we assume the system is initialized in state $x$ at time $t$.
\begin{itemize}
    \item \textbf{Reach task:} regardless of the adversary's actions, the system should be in a (relatively open) target set $\targetSet \sbs \Rn \tms (-\ii,T]$ \textit{at some time} between $t$ and $T$, i.e. for all $\dStrat \in \dStrats{t}$, there is a $\uSig \in \uSigs{t}$ s.t. for some $\ta \in [t,T]$ we have $(\standardTrajectory(\ta),\ta) \in \targetSet$.
    \item \textbf{Avoid task:} regardless of the adversary's actions, the system should remain within a (relatively open) safe set $\safeSet \sbs \Rn \tms (-\ii,T]$ \textit{at all times} between $t$ and $T$, i.e.
    for all $\dStrat \in \dStrats{t}$ there exists a $\uSig \in \uSigs{t}$ s.t. for all $\ta \in [t,T]$ we have $(\standardTrajectory(\ta), \ta) \in \safeSet$.
    \item \textbf{Reach-avoid task:} regardless of the adversary's actions, the system should be in $\targetSet$ \textit{at some time} between $t$ and $T$ while remaining in $\safeSet$ \textit{at all times} between $t$ and when it reaches $\targetSet$, i.e. for all $\dStrat \in \dStrats{t}$ there is a $\uSig \in \uSigs{t}$ s.t. for some time $\ta \in [t,T]$ we have $(\standardTrajectory(\ta),\ta) \in \targetSet$ and also $(\standardTrajectory(\ka),\ka) \in \safeSet$ for all $\ka \in [t,\ta]$.
\end{itemize}
Note that above, the target (safe) sets are specified as a subset of space-time to allow for moving targets (constraints).

In HJR, these tasks, which have binary outcomes, are converted into ones with continuous outcomes by representing the set $\targetSet$ ($\safeSet$) as the strict zero super-level set of some continuous functions $\tar$ ($\obs$) and subsequently defining certain payoff functionals and value functions. We refer to \cite{HJR-Survey} for details.
For the theoretical results in this work, we will only be concerned with the value functions themselves, but for motivation it suffices to know that a task can (cannot) be successfully completed from an initial state $x$ and initial time $t$ if the corresponding value function is strictly positive (negative) when evaluated at $x$ and $t$.
The magnitude of the value function measures success or failure extent.

\subsection{The reach, avoid, and reach-avoid payoffs and values}
We now specify the payoffs and value functions for the reach (R), avoid (A), and reach-avoid (RA) tasks.
Given $x \in \Rn$, $t < T$, and $\tar,\obs: \Rn \tms (-\ii, T] \to \R$, we define the payoffs $\JR{x,t}{\tar},\JA{x,t}{\obs},\JRA{x,t}{\tar,\obs}:\uSigs{t} \tms \dStrats{t} \to \R$ by
\begin{align*}
    \JR{x,t}{\tar}(\uSig, \dStrat) &:= \max_{\ta \in [t,T]} \tarEval{\standardTrajectory}{\ta},
    \\
    \JA{x,t}{\obs}(\uSig, \dStrat) &:= \min_{\ta \in [t,T]} \obsEval{\standardTrajectory}{\ta},
    \\
    \JRA{x,t}{\tar,\obs}(\uSig, \dStrat) &:= \\
    \max_{\ta \in [t,T]} \min &\lf\{ \tarEval{\standardTrajectory}{\ta}, \min_{\ka \in [t,\ta]} \obsEval{\standardTrajectory}{\ka} \rg\}.
\end{align*}
Those familiar with temporal logic may note that each of these payoff functionals is related to the robustness metric of the corresponding task specification \cite{donze-and-maler,reactive-synthesis}, though we avoid temporal logic notation for unfamiliar readers.
Next, given $\tar,\obs: \Rn \tms (-\ii, T] \to \R$, we define the value functions $\VR{\tar},\VA{\obs},\VRA{\tar,\obs}:\Rn \tms (-\ii,T] \to \R$ by
\begin{align*}
    \VR{\tar}&(x,t) := \begin{cases}
        \displaystyle\infd{t}\supu{t} \JR{x,t}{\tar}(\uSig, \dStrat) & t < T, \\
        \tar(x,T) & t = T, \\
    \end{cases}
    \\
    \VA{\obs}&(x,t) := \begin{cases}
        \displaystyle\infd{t}\supu{t} \JA{x,t}{\obs}(\uSig, \dStrat) & t < T, \\
        \obs(x,T) & t = T,
    \end{cases}
\end{align*}
\begin{align*}
    \VRA{\tar,\obs}&(x,t) :=
    \begin{cases}
        \displaystyle\infd{t}\supu{t} \JRA{x,t}{\tar,\obs}(\uSig, \dStrat) & t < T, \\
        \min\{\tar(x,T),\obs(x,T)\} & t = T.
    \end{cases}
\end{align*}

\section{Composite Tasks: The RAA and RR Problems}\label{section:raa-rr}

\subsection{Payoffs and Value Functions}
Two of the fundamental composite tasks in HJR are the reach-always-avoid (RAA) and reach-reach (RR) tasks, which are mathematically complementary, application-relevant, and demonstrative of the basic theoretical tools for value function decomposition in HJR \cite{Sharpless-ICLR}.
The RAA and RR payoffs and value functions are specified as follows.
Given $x \in \Rn$, $t < T$, and $\tar,\obs,\prm,\alt: \Rn \tms (-\ii, T] \to \R$, we define the payoffs $\JRAA{x,t}{\tar,\obs},\JRR{x,t}{\prm,\alt}:\uSigs{t} \tms \dStrats{t} \to \R$ by
\begin{align*}
    \JRAA{x,t}{\tar,\obs}(\uSig, \dStrat) & := \min \lf\{ \JR{x,t}{\tar}(\uSig, \dStrat), \JA{x,t}{\obs}(\uSig, \dStrat) \rg\},\\
    \JRR{x,t}{\prm,\alt}(\uSig, \dStrat)  & := \min \lf\{ \JR{x,t}{\prm}(\uSig, \dStrat), \JR{x,t}{\alt}(\uSig, \dStrat)\rg\}.
\end{align*}

Conceptually, in the RAA task, we require the system to reach its goal while never intersecting an obstacle (even after reaching the goal, unlike in the RA task).
In the RR task, there are instead two different goals (but no obstacle), and we require the system to reach both, in either order.
For each task, the dual-objective is encoded in the corresponding payoff via the $\min$ operator, i.e. the composite payoff will be positive iff both of the basic payoffs are positive.

Given $\tar,\obs,\prm,\alt: \Rn \tms (-\ii, T] \to \R$, define the value functions $\VRAA{\tar,\obs},\VRR{\prm,\alt}:\Rn \tms (-\ii,T] \to \R$ by
\begin{align*}
    \displaystyle\VRAA{\tar,\obs}&(x,t) &:=
    \begin{cases}
        \displaystyle\infd{t}\supu{t} \JRAA{x,t}{\tar,\obs}(\uSig, \dStrat) & t < T, \\
        \min\{\tar(x,T),\obs(x,T)\} & t = T.
    \end{cases} \\
    \displaystyle\VRR{\prm,\alt}&(x,t)  &:=
    \begin{cases}
        \displaystyle\infd{t}\supu{t} \JRR{x,t}{\prm,\alt}(\uSig, \dStrat) & t < T, \\
        \min\{\prm(x,T),\alt(x,T)\} & t = T.
    \end{cases}
\end{align*}

The previous work \cite{Sharpless-ICLR} uses decompositions of these value functions into the basic value functions in the one-player case to develop algorithms for reinforcement learning. 
The key innovation in this work is that we establish that these decompositions also hold with an adversary.

We additionally note that we study these problems in the continuous-state, continuous-time, finite-horizon case, whereas in \cite{Sharpless-RSS}, they were explored in a finite-state, discrete-time, infinite-horizon scenario.
The former setup is more typical in HJR, and the latter is more typical in reinforcement learning.
While this choice is mostly stylistic in the one-player case, in two-player continuous-time settings, the non-anticipative adversary strategies are central to the theory.

\subsection{Adversary-free vs. adversarial distinction}\label{subsection:counter-example}

To motivate our results, in this subsection, we provide an informal counter-example to intuitively explain why a value function decomposition that is exact with only a controller player may no longer hold with an adversary.
This is in fact true even when there is no controller player but only a disturbance player.
Consider the payoff functional
\begin{align*}
    &J_{x,t}^\mathrm{ORR}[\prm,\alt](\uSig,\dStrat) := \\
    &\quad\max_{\ta \in [t,T]} \min \lf\{ \max_{\ka \in [t,\ta]} \prm\lf( \standardTrajectory(\ka)\rg), \max_{\ka \in [\ta,T]} \alt\lf(\standardTrajectory(\ka)\rg)\rg\},
\end{align*}
for reaching two goals in a specified order,
with corresponding value function $V_\mathrm{ORR}[\prm,\alt]$.
In the RR problem with only the controller (which attempts to maximize the payoff), the value function can be decomposed as $\VRR{\prm,\alt} = \max\{V_\mathrm{ORR}[\prm,\alt], V_\mathrm{ORR}[\alt,\prm]\}$ \cite{Sharpless-RSS}.
Intuitively, this means the controller should choose whether it is better to go to goal 1 first and goal 2 next or vice versa, and then use an $\ep$-optimal control signal for the chosen order.
However, with a disturbance (which attempts to minimize the payoff) there are scenarios in which the adversary can determine the order in which the system will reach the goals, but cannot ultimately prevent both goals from being reached.

\begin{example}
Consider the system $\dot{\mathbf{x}} = \dSig$, with $\dVals = [-1,+1]$.
Define two targets, $\targetSet_1 = \{(x,t) \in \R \tms (-\ii,T]: (x < 0.5 \text{ and } 1 < t < 2) \text{ or } (-0.5 < x \text{ and } 2 < t)\}$ and $\targetSet_2 = \{(x,t) \in \R \tms (-\ii,T]: (-0.5 < x \text{ and } 1 < t < 2) \text{ or } (x < 0.5 \text{ and } 2 < t)\}$, where $T = 3$.
Let $\prm, \alt: \R \tms (-\ii, T]$ be continuous and have strict zero super-level sets $\targetSet_1$ and $\targetSet_2$, respectively.
Note that from the initial state $x = 0$ and initial time $t = 0$, the disturbance can ensure via $\dSig \equiv -1$ that the system passes through $\targetSet_1$ and then reaches (and stays in) $\targetSet_2 \setminus \targetSet_1$, or it can ensure via $\dSig \equiv +1$ that the system passes through $\targetSet_2$ and then reaches (and stays in) $\targetSet_1 \setminus \targetSet_2$.
It cannot, however, prevent the system from eventually reaching $\targetSet_1$ and eventually reaching $\targetSet_2$.
In particular, $V_\mathrm{ORR}[\prm,\alt](0,0) \le 0$ and $V_\mathrm{ORR}[\alt,\prm](0,0) \le 0$, but $\VRR{\prm,\alt}(0,0) > 0$, i.e. the above decomposition of the RR value function breaks down for a disturbance player.
\end{example}

In the next section, we study an alternative decomposition of the RR value function that does hold with an adversary.
\section{Theoretical Results}\label{section:theory}

We first introduce the result for the RAA value function.
\begin{theorem}\label{theorem:raa-theorem}
    Suppose $\tar,\obs: \Rn \tms (-\ii, T] \to \R$ are both continuous.
    Let $\specialTar:\Rn \tms (-\ii,T] \to \R$ be given by 
    \begin{equation*}
        \specialTar(x,t) := \min \lf\{ \tar(x,t), \VA{\obs}(x,t) \rg\}.
    \end{equation*}
    Then for each $x \in \Rn$ and $t < T$, we have
    \begin{equation}\label{eqn:raa-theorem}
        \VRAA{\tar,\obs}(x,t) = \VRA{\specialTar,\obs}(x,t).
    \end{equation}
\end{theorem}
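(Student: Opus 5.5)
We prove \eqref{eqn:raa-theorem} by establishing the two inequalities $\VRAA{\tar,\obs}(x,t)\le\VRA{\specialTar,\obs}(x,t)$ and $\VRAA{\tar,\obs}(x,t)\ge\VRA{\specialTar,\obs}(x,t)$ separately. Each is proved by an $\ep$-argument that splices non-anticipative strategies or control signals at a switching time. A preliminary identity rewrites the RAA payoff along any trajectory $\xSig=\standardTrajectory$:
\begin{equation*}
\JRAA{x,t}{\tar,\obs}(\uSig,\dStrat)=\max_{\ta\in[t,T]}\min\Big\{\tarEval{\xSig}{\ta},\min_{\ka\in[t,\ta]}\obsEval{\xSig}{\ka},\min_{\ka\in[\ta,T]}\obsEval{\xSig}{\ka}\Big\}.
\end{equation*}
This holds simply because $\min_{[t,\ta]}\obs$ and $\min_{[\ta,T]}\obs$ together give $\min_{[t,T]}\obs$. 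The point is that the last term is the avoid payoff from $(\xSig(\ta),\ta)$, and this term is what the RA problem with target $\specialTar$ replaces by $\VA{\obs}$. We also use that $\VA{\obs}$ is continuous, a standard fact under Assumptions \ref{assumption:regularity}--\ref{assumption:compactness} \cite{Evans-Souganidis-Differential-Games-Representation-Formulas-1984}. Continuity makes $\specialTar$ continuous, so the RA payoff is well defined.

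\textbf{Upper bound ($\le$).} Fix $\ep>0$. We must construct $\dStrat\in\dStrats{t}$ such that $\JRAA{x,t}{}(\uSig,\dStrat)\le \VRA{\specialTar,\obs}(x,t)+\ep$ for every $\uSig$.
\begin{itemize}
\item Take $\firstStrat\in\dStrats{t}$ that is $\ep$-optimal for the RA game with target $\specialTar$.
\item For each $(z,s)$, take $\secondStrat{z,s}\in\dStrats{s}$ that is $\ep$-optimal for the avoid game, so that $\sup_{\vSig}\JA{z,s}{\obs}(\vSig,\secondStrat{z,s})\le\VA{\obs}(z,s)+\ep$.
\item Given $\uSig$, run $\firstStrat$ until the switching time
\[
\switchTime{}(\uSig)=\inf\{\ta:\min\{\tarEval{\xSig}{\ta},\min_{[t,\ta]}\obs\}>\VRA{\specialTar,\obs}(x,t)+\ep \text{ and } \VA{\obs}(\xSig(\ta),\ta)\le \VRA{\specialTar,\obs}(x,t)+\ep\},
\]
computed along $\firstTraj{\uSig}$. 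After $\switchTime{}$, play $\secondStrat{z,s}$ from $(z,s)=(\xSig(s),s)$ on the restricted control.
\end{itemize}
The switching time is determined by the past of $\uSig$, and this is what makes the composed $\composedStrat$ non-anticipative. Now consider any time $\ta$ where $\min\{\tar,\min_{[t,\ta]}\obs\}$ is large. By the RA bound on $\firstStrat$, either the RA bound already fails there, or $\VA{\obs}$ is small at that point. Once $\VA{\obs}$ is small we have switched, and the tail $\min_{[\ta,T]}\obs$ is then at most $\VA{\obs}+\ep$. The payoff identity therefore gives the bound up to $2\ep$.

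To avoid measurable-selection issues for $(z,s)\mapsto\secondStrat{z,s}$, the plan is to use continuity of $\VA{\obs}$ and trajectory Lipschitz estimates. We pick strategies on a finite grid of $(z,s)$ and apply a nearby grid strategy. Equivalently, we switch at the first time crossing a slightly shifted threshold, which keeps $\composedStrat$ well defined.

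\textbf{Lower bound ($\ge$).} Fix $\dStrat\in\dStrats{t}$; we must find $\uSig$ with payoff at least $\VRA{\specialTar,\obs}(x,t)-2\ep$.
\begin{itemize}
\item Since $\sup_\uSig\JRA{x,t}{\specialTar,\obs}(\uSig,\dStrat)\ge \VRA{\specialTar,\obs}(x,t)$, choose $\uSig_1$ whose RA payoff is within $\ep$ of this value, attained at some time $s$ with state $z$. Then $\tar(z,s)$, $\VA{\obs}(z,s)$ and $\min_{[t,s]}\obs$ are all at least $\VRA{\specialTar,\obs}(x,t)-\ep$.
\item Define the continuation strategy $\dStrat'[\vSig]:=\dStrat[\uSig_1\oplus_s\vSig]|_{[s,T)}$, where $\oplus_s$ concatenates signals at time $s$. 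Non-anticipativity of $\dStrat$ gives $\dStrat'\in\dStrats{s}$.
\item By the definition of $\VA{\obs}(z,s)$ as an inf-sup, some $\vSig$ achieves $\JA{z,s}{\obs}(\vSig,\dStrat')\ge\VA{\obs}(z,s)-\ep$.
\end{itemize}
The concatenated control $\uSig_1\oplus_s\vSig$ reproduces the same trajectory on $[t,s]$; this again uses non-anticipativity, since $\dStrat$ on $[t,s)$ depends only on $\uSig_1$ there. On $[s,T]$ it follows $\xSig_{z,s}^{\vSig,\dStrat'}$. Plugging $\ta=s$ into the payoff identity gives the lower bound. One technical point remains: $\dStrat$ at the junction must agree with its values on $[t,s)$ for the a.e.\ defined signals. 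This follows directly from the definition of non-anticipativity.

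\textbf{Main obstacle.} The main difficulty is the upper bound: building the spliced adversary $\composedStrat$ so that it is genuinely non-anticipative and measurably well defined. This requires switching at a hitting time determined causally by the trajectory and choosing the avoid strategies uniformly via continuity of $\VA{\obs}$. The lower bound is a routine dynamic-programming concatenation.
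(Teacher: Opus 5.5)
Your proposal is correct. The ($\ge$) direction is the paper's argument; fixing an arbitrary $\dStrat$ instead of an $\ep$-optimal one is equivalent. You omit one case: the RA maximum for $\uSig_1$ may be attained at $s=T$, where $\dStrats{T}$ and $\uSigs{T}$ are not defined. That case is trivial. Then $\VA{\obs}(z,T)=\obs(z,T)$, so $\uSig_1$ itself already has RAA payoff at least $\VRA{\specialTar,\obs}(x,t)-\ep$. (The paper instead uses the strict inequality to move $s$ below $T$.)

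You should also drop the grid paragraph, because there is no measurable-selection issue. An element of $\dStrats{t}$ is just a non-anticipative map $\uSigs{t}\to\dSigs{t}$. For each fixed $\uSig$, the spliced signal concatenates two measurable signals at a single time, so it is measurable. Choosing $\secondStrat{z,s}$ pointwise, as the paper does, suffices. The grid device would actually cause trouble, since a strategy in $\dStrats{s'}$ cannot be played from a different time $s$ without further work.

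The ($\le$) direction uses a genuinely different switching rule.
\begin{itemize}
\item The paper switches at the first $s$ with $\tarEval{\firstTraj{\uSig}}{s}\ge\VAEval{\obs}{\firstTraj{\uSig}}{s}$.
\item You switch when $\phi(\ta):=\min\{\tarEval{\xSig}{\ta},\min_{\ka\in[t,\ta]}\obsEval{\xSig}{\ka}\}$, with $\xSig=\firstTraj{\uSig}$, first exceeds $c:=\VRA{\specialTar,\obs}(x,t)+\ep$.
\end{itemize}
Your second conjunct is redundant. By $\ep$-optimality of $\firstStrat$, $\min\{\phi(\ta),\VAEval{\obs}{\xSig}{\ta}\}\le c$ for every $\ta$, so $\phi(\ta)>c$ already forces $\VA{\obs}\le c$. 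Hence the switch set is relatively open and its infimum $s^*$ need not be attained. You should state explicitly that continuity of $\VA{\obs}$ and of the trajectory then gives $\VA{\obs}(z,s^*)\le c$.

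The estimate should then be split in two:
\begin{itemize}
\item For $\ta<s^*$, the composite trajectory equals $\xSig$ and $\phi(\ta)\le c$.
\item For $\ta\ge s^*$, the RAA term is at most $\min_{[s^*,T]}\obs\le\VA{\obs}(z,s^*)+\ep\le c+\ep$.
\end{itemize}
The tail $\min_{[\ta,T]}\obs$ you mention is not the right quantity when $\ta>s^*$. Non-anticipativity follows as in the paper, since whether $s^*(\uSig)<b$, and if so the values of $s^*$ and $z$, is decided by the trajectory on $[t,b]$.

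The two rules buy different things. Yours gives a uniform $2\ep$ bound, with no separate $\argmax$ argument for the no-switch case. The paper's rule is independent of $(x,t)$, $\ep$ and the number $\VRA{\specialTar,\obs}(x,t)$. It is therefore a genuine state-feedback switching law, and it is exactly the one used for synthesis in the examples. Its switch set is also closed, so the switch time is a minimum and no limiting argument is needed.
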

\begin{proof}
    See appendix.
\end{proof}
The practical significance of the above theorem is as follows.
In order to compute the RAA value function, $\VRAA{\tar,\obs}$, we can proceed in three steps.
First, we compute the avoid value function $\VA{\obs}$ for the obstacle, ignoring the target, using traditional HJR methods (e.g. numerical HJ-PDEs \cite{Bardi-Dolcetta-Optimal-Control,hj-py,helper-oc} or learning \cite{Reachability-RL,Deep-Reach}).
Second, we form a special target function $\specialTar$ using $\VA{\obs}$. Third, we solve for the reach-avoid value function $\VRA{\specialTar,\obs}$, again by traditional methods.
The above result guarantees that this latter value function will indeed be equal to the RAA value function.

We now turn to the result for the RR value function.
\begin{theorem}\label{theorem:rr-theorem}
    Suppose $\prm,\alt: \Rn \tms (-\ii, T] \to \R$ are both continuous.
    Let $\specialTar: \Rn \tms (-\ii, T] \to \R$ be given by
    \begin{align*}
        \specialTar(x,t) := \max \big\{ &\min\lf\{\prm(x,t), \VR{\alt}(x,t) \rg\}, \\
        &\min\lf\{\alt(x,t), \VR{\prm}(x,t) \rg\} \big\}.
    \end{align*}
    Then for each $x \in \Rn$ and $t < T$, we have
    \begin{equation}\label{eqn:rr-theorem}
        \VRR{\prm,\alt}(x,t) = \VR{\specialTar}(x,t).
    \end{equation}
\end{theorem}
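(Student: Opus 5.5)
We prove the two inequalities $\VRR{\prm,\alt}(x,t) \le \VR{\specialTar}(x,t)$ and $\VRR{\prm,\alt}(x,t) \ge \VR{\specialTar}(x,t)$ separately. In each direction we argue by strategy construction with an $\ep$ margin, in the same style we would use for Theorem~\ref{theorem:raa-theorem}. The basic tool is a \emph{dynamic-programming splice} of non-anticipative strategies. Given a first strategy $\firstStrat \in \dStrats{t}$ and a family $\secondStrat{z,s} \in \dStrats{s}$ indexed by states and times, we define a composed strategy $\composedStrat$. It follows $\firstStrat[\uSig]$ until a switching time $\switchTime{\uSig}$, which is determined non-anticipatively from the trajectory (for instance, the first time a continuous function of $(\firstTraj{\uSig}(\ta),\ta)$ crosses a threshold). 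After that time it plays $\secondStrat{\switchState{},\switchTime{}}$ applied to the restriction of $\uSig$ to $[\switchTime{},T)$. We would first prove a lemma that such a $\composedStrat$ is non-anticipative and that $\composedTraj{\uSig}$ agrees with $\firstTraj{\uSig}$ up to the switch and with $\secondTraj{\uSig|}$ afterwards. Measurability of the switched disturbance needs care. To handle it we would work with countably many regions: we partition space-time into small Borel cells and pick one $\ep$-optimal $\secondStrat{}$ per cell, using continuity of $\VR{\cdot}$, which holds for continuous $\tar$ under Assumptions~\ref{assumption:regularity}--\ref{assumption:compactness}.

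For the direction $\VRR{\prm,\alt} \le \VR{\specialTar}$, we show that for any $\dStrat$ that is $\ep$-optimal for the adversary in $\VR{\specialTar}$, we can build a strategy that holds $\JRR{}{}$ to at most $\VR{\specialTar}+O(\ep)$. Set $c:=\VR{\specialTar}(x,t)$. The adversary plays $\firstStrat$, which is $\ep$-optimal for $\VR{\specialTar}$, so every $\uSig$ satisfies $\specialTar(\firstTraj{\uSig}(\ta),\ta)\le c+\ep$ for all $\ta$. The switch happens at the first time $\ta$ at which $\prm(\firstTraj{\uSig}(\ta),\ta)\ge c+2\ep$ or $\alt(\firstTraj{\uSig}(\ta),\ta)\ge c+2\ep$. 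If $\prm$ triggers, then $\specialTar\le c+\ep$ forces $\VR{\alt}(z,s)\le c+\ep$, and the adversary switches to an $\ep$-optimal strategy for $\VR{\alt}$ from $(z,s)$. Before the switch, $\alt<c+2\ep$ held, so $\max_\ta \alt\le c+2\ep$ along the whole trajectory. The case where $\alt$ triggers is symmetric. If no switch occurs, both reach payoffs are at most $c+2\ep$. In every case $\JRR{}{}\le c+O(\ep)$.

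For the direction $\VRR{\prm,\alt} \ge \VR{\specialTar}$, we fix an arbitrary $\dStrat\in\dStrats{t}$ and must produce $\uSig$ with $\JRR{}{}\ge c-O(\ep)$. The difficulty is that the controller faces a fixed strategy rather than a fixed signal. We first show that $\inf_\dStrat\sup_\uSig \JR{}{\specialTar}\ge c$ yields a $\uSig_1$ and a time $s$ with $\specialTar(\traj{x,t}{\uSig_1,\dStrat}(s),s)\ge c-\ep$; say the $\min\{\prm,\VR{\alt}\}$ branch is active, so $\prm\ge c-\ep$ at time $s$. Next we define the restricted strategy $\dStrat_s[\vSig]:=\dStrat[\uSig_1\oplus_s \vSig]|_{[s,T)}$. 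Non-anticipativity of $\dStrat$ makes $\dStrat_s$ non-anticipative on $\uSigs{s}$. Since $\VR{\alt}(z,s)\ge c-\ep$, there is a $\vSig$ with $\JR{z,s}{\alt}(\vSig,\dStrat_s)\ge c-2\ep$. The concatenation $\uSig=\uSig_1\oplus_s\vSig$ then reaches $\prm$ at $s$ and $\alt$ afterwards. The main obstacle is the first step: extracting a single $\uSig_1$ together with a time $s$ is straightforward here, but the delicate point is that the branch and the time depend on $\uSig_1$, so the concatenation must reproduce the same trajectory on $[t,s]$. That reproduction is exactly what non-anticipativity guarantees.
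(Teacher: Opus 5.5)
Your argument is correct, and its skeleton matches the paper's. In the $(\le)$ direction you splice an $\ep$-optimal strategy for $\VR{\specialTar}$ with $\ep$-optimal reach strategies for the \emph{other} target, at a switch time read off the $\firstStrat$-trajectory. In the $(\ge)$ direction you concatenate controls against the restricted strategy $\dStrat_s$. The differences lie in the case logic.

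For $(\le)$, the paper switches at the first time $\prm \ge \VR{\alt}$ or $\alt \ge \VR{\prm}$. That condition does not involve $c=\VR{\specialTar}(x,t)$. The paper then bounds the payoff through the identity $\max\{\prm,\alt\}=\specialTar$, which holds along the trajectory before the switch. Your level-set rule (first time $\prm$ or $\alt$ reaches $c+2\ep$) makes the bound nearly immediate and avoids bookkeeping at the switch instant. Indeed, $\specialTar\le c+\ep$ forces $\VR{\alt}(z,s)\le c+\ep$, and since $\alt\le\VR{\alt}$ pointwise, the two triggers cannot fire at the same instant; state this explicitly. The price is that your composite strategy depends on the initial condition through $c$. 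The paper's switching condition is instead a state-feedback rule independent of $(x,t)$, which is what its numerical examples implement.

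For $(\ge)$, you choose the continuation target by which branch of $\specialTar(z,s)$ is active, which gives $J^{\mathrm{RR}}\ge c-2\ep$ directly. The paper instead splits on $\prm(z,s)\ge\alt(z,s)$ and needs Lemma~\ref{lemma:easy-lemma}, so your version is slightly cleaner. You should still handle $s=T$. It is trivial, since $\VR{\alt}(z,T)=\alt(z,T)$; alternatively, move $s$ slightly earlier by continuity as the paper does.

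Drop the measurability detour. For each fixed $\uSig$, the switch time and switch state are fixed. So $\dStrat[\uSig]$ is just the concatenation, at a single instant, of two measurable signals. Elliott--Kalton strategies require no measurability in $\uSig$ or in $(z,s)$, so choosing $\secondStrat{z,s}$ separately for every $(z,s)$, as the paper does, suffices.

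The cell-partition scheme would actually create extra work. A strategy chosen at a representative $(z_k,s_k)$ is defined on $\uSigs{s_k}$, not on $\uSigs{s}$. Making it $\ep$-optimal from nearby initial data would need estimates beyond continuity of $\VR{\cdot}$: a uniform-in-$\uSig$ dependence of the payoff on the initial state for a fixed strategy, and a way to transport strategies between start times.
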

\begin{proof}
    See appendix.
\end{proof}
The significance of this result is similar to the last.
First we compute $\VR{\prm}$ and $\VR{\alt}$ using standard methods.
Next, we form $\specialTar$ using $\VR{\prm}$ and $\VR{\alt}$.
Finally, we compute $\VR{\specialTar}$.
The resulting value function will by the above theorem be precisely the RR value function, $\VRR{\prm,\alt}$.

\section{Applications to Optimal Dosing}\label{section:examples}

Pharmacokinetics (PK) models describe how pharmaceutical drugs are absorbed, transported, metabolized, and excreted in the body.
Pharmacodynamics (PD) models describe how drugs affect biomolecular processes in the body.
One can use HJR to analyze a PK-PD model to help design an optimal dosing regimen.
We refer the reader to \cite{Moore_2018} for more regarding optimal control for PK-PD problems.
In practice, after an optimal dosing regimen is found with HJR, one must typically still translate the continuous optimal regimen into a discrete one (e.g. two pills per day).
In these examples, we ignore this matter and assume continuous dosing.

\subsection{Example 1: Two-compartment PK model}
In our first example, we will show how using the RAA formulation can fix a problem that can occur, even for very basic PK models, with the traditional reach-avoid formulation.
Consider a simple PK model that describes the concentration of some pharmaceutical in two different physiological compartments: the systemic circulation (i.e. blood) and the kidneys.
The pharmaceutical agent is delivered intravenously and is eliminated through the kidneys.

We would like for the time-integrated drug concentration, known in PK as the AUC, in the blood to achieve a therapeutic threshold $\theta_{\mathrm{ther}}$, while ensuring the instantaneous concentration in the kidneys remains below a toxic threshold $\theta_{\mathrm{tox}}$.
The maximal allowed supply rates of the drug at time $\ta$ is given by $\mathbf{M}(\ta)$, which is time-varying to allow us to wean the patient off treatment after some time, and monitor the patient thereafter.
This system can be modeled as follows:
\begin{align}
    \dot\xSig_1 &= \mathbf{M} \uSig - k_1 \dSig_1 \xSig_1/(\xSig_1 + K_1), \label{eqn:example-1-eq-1}
    \\
    \dot\xSig_2 &= k_1 \dSig_1\xSig_1/(\xSig_1 + K_1) - k_2 \dSig_2 \xSig_2 / (\xSig_2 + K_2),
    \label{eqn:example-1-eq-2}
    \\
    \dot\xSig_3 &=  \xSig_1.
    \label{eqn:example-1-eq-3}
\end{align}
Here, $\xSig_1$, $\xSig_2$, and $\xSig_3$ represent the concentration of drug in the blood, the concentration of drug in the kidneys, and the AUC, respectively.
The input $\uSig$ represents the rate at which the pharmaceutical is injected into the blood.
The disturbance terms $\dSig_1$ and $\dSig_2$ reflect unmodeled dynamics in the transport rate constant $k_1$ and the elimination rate constant $k_2$ respectively.
The parameters $K_1$ and $K_2$ represent half-saturation constants for transport and elimination, respectively.
We set $\uVals = [0,1]$, and $\dVals = \{d \in \R^2 \mid \|d - (1,1)\|_2 \le \ep\}$.

In Fig. \ref{fig:raa}, we compare the results of using the traditional reach-avoid approach versus the RAA approach on this optimal dosing problem.
Optimizing the control actions using the traditional reach-avoid formulation results in a rapid increase of the drug concentration in the blood, which later leads to toxicity in the kidneys (Fig. \ref{fig:raa}, dotted lines).
Even if one terminates the treatment course as soon as the therapeutic threshold is achieved, drug from the blood still eventually moves to the kidneys, again exceeding the toxic threshold (Fig. \ref{fig:raa}, dash-dot lines).
By contrast, using the RAA formulation allows one to achieve the therapeutic threshold while avoiding the toxic threshold (Fig. \ref{fig:raa}, solid lines).

\subsection{Example 2: Mutual inhibition PD model}
Consider a PD model of activity in a cell, in which two proteins $\mathrm{X}_1$ and $\mathrm{X}_2$ can bind to form an inactive complex.
We can exogenously induce expression of either protein via the inputs $\uSig_1$ and $\uSig_2$ (representing e.g. small, quickly diffusing molecules). 
This system can be modeled as follows:
\begin{align}
    \dot\xSig_1 &= \uSig_1 + \dSig_1 - k_b \xSig_1 \xSig_2 + k_u \xSig_3 - \gamma_1 \xSig_1, \label{eqn:example-2-eq-1}
    \\
    \dot\xSig_2 &= \uSig_2 + \dSig_2 - k_b \xSig_1 \xSig_2 + k_u \xSig_3 - \gamma_2 \xSig_2, \label{eqn:example-2-eq-2}
    \\
    \dot\xSig_3 &=k_b \xSig_1 \xSig_2 -  k_u \xSig_3 - \gamma_3 \xSig_3, \label{eqn:example-2-eq-3}
\end{align}
where $\xSig_1$, $\xSig_2$, and $\xSig_3$, represent the concentrations of unbound $\mathrm{X}_1$, unbound $\mathrm{X}_2$, and $\mathrm{X}_1-\mathrm{X}_2$ complex, respectively.
The disturbances $\dSig_1$ and $\dSig_2$ represent model uncertainty, with $\dVals = \{d \in \R^2 : \|d - (0,0)\|_2 \le \ep\}$.
The parameters $k_b$ and $k_u$ are binding and unbinding rate constants.
The goal is to eventually reach an $\mathrm{X}_1$ concentration above some threshold $\theta_{\mathrm{ther},1}$ and to eventually reach an $\mathrm{X}_2$ concentration above $\theta_{\mathrm{ther},2}$ as well.
In Figure \ref{fig:rr}, observe that solving a basic reach problem to simultaneously maximize the concentrations does not achieve the objective.
On the other hand, solving the RR problem allows the controller to temporally coordinate $\mathrm{X}_1$ and $\mathrm{X}_2$ production to achieve the task.
\begin{figure}[!ht]
    \vspace{2mm}   
    \centering
    \includegraphics[width=.99\columnwidth]{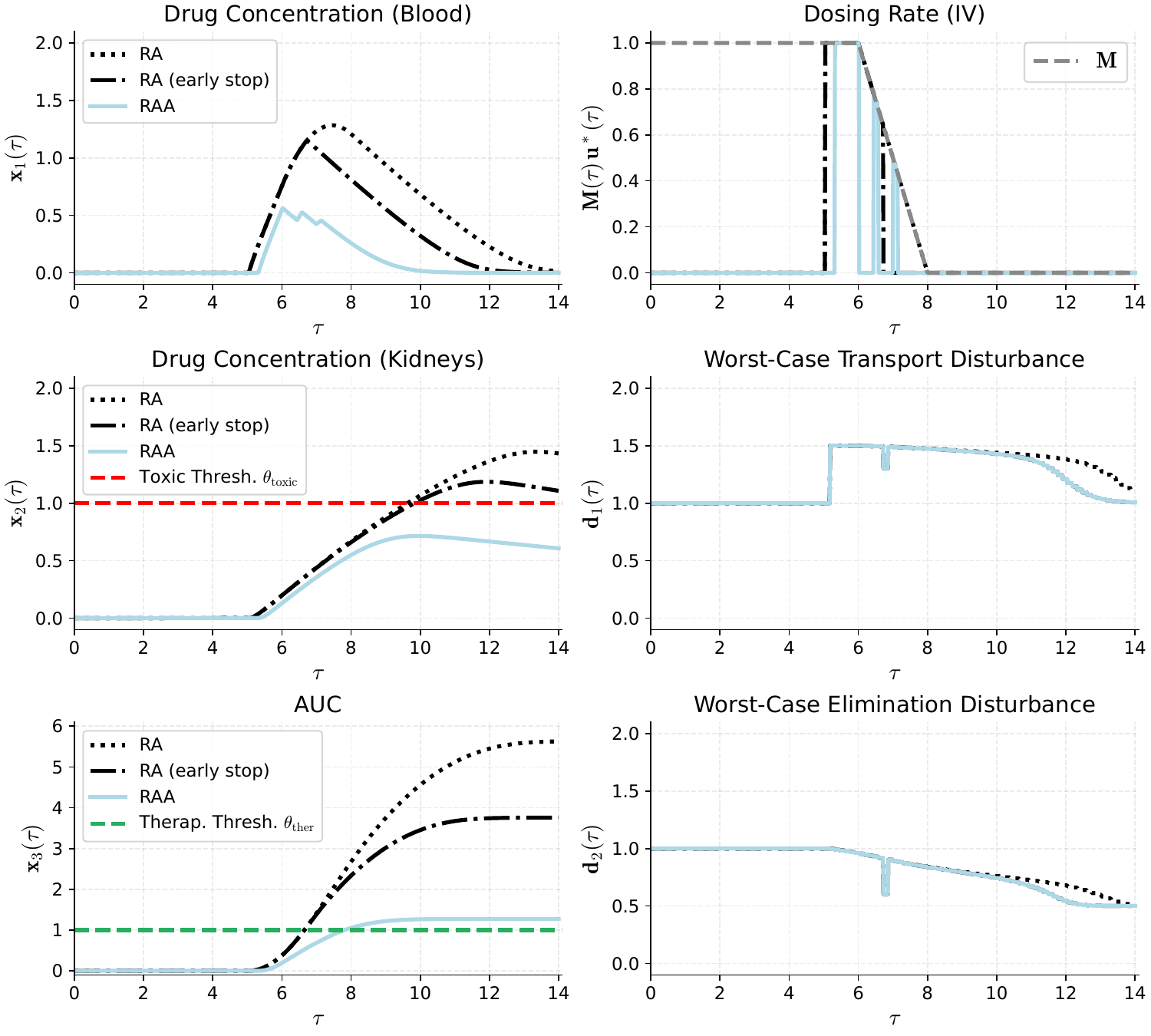}
    \caption{Closed-loop simulation of the model of \eqref{eqn:example-1-eq-1}-\eqref{eqn:example-1-eq-3} under controllers derived using the RA (dotted line) and RAA (solid line) valued functions.
    In the dash-dot line, the control is set to $0$ after the therapeutic threshold is reached (the disturbance continues to act optimally, now based on the avoid value function for zero control).
    Parameters used were $k_1 = 0.2$, $k_2 = 1$, $K_1 = 0.1$, $K_2 = 10$, $\theta_{\mathrm{ther}} = \theta_{\mathrm{toxic}} = 1$, $\ep = 0.5$.
    The target function is $\tar(x_1,x_2,x_3,t) = x_3 - \theta_{\mathrm{ther}}$, and the obstacle function is $\obs(x_1,x_2,x_3, t) = \theta_{\mathrm{toxic}} - x_2$.
    $\mathbf{M}(\tau)$ is $1$ until $\ta = 6$, $0$ after $\tau = 8$, and linear in between.
    In the RAA case, the optimal control and disturbance are chosen using the gradient of the RAA value function $\VRAA{\tar,\obs}$ prior to the first time $s$ for which $\tar(\xSig(s),s) \ge \VA{\obs}(\xSig(s),s)$ (this is the proper switching time suggested by the dynamic programming argument used in the proof of Theorem \ref{theorem:raa-theorem}) and is chosen using the avoid value function $\VA{\obs}$ thereafter.
    }
    \label{fig:raa}
\end{figure}

\begin{figure}[!ht]
    \centering
    \includegraphics[width=.95\columnwidth]{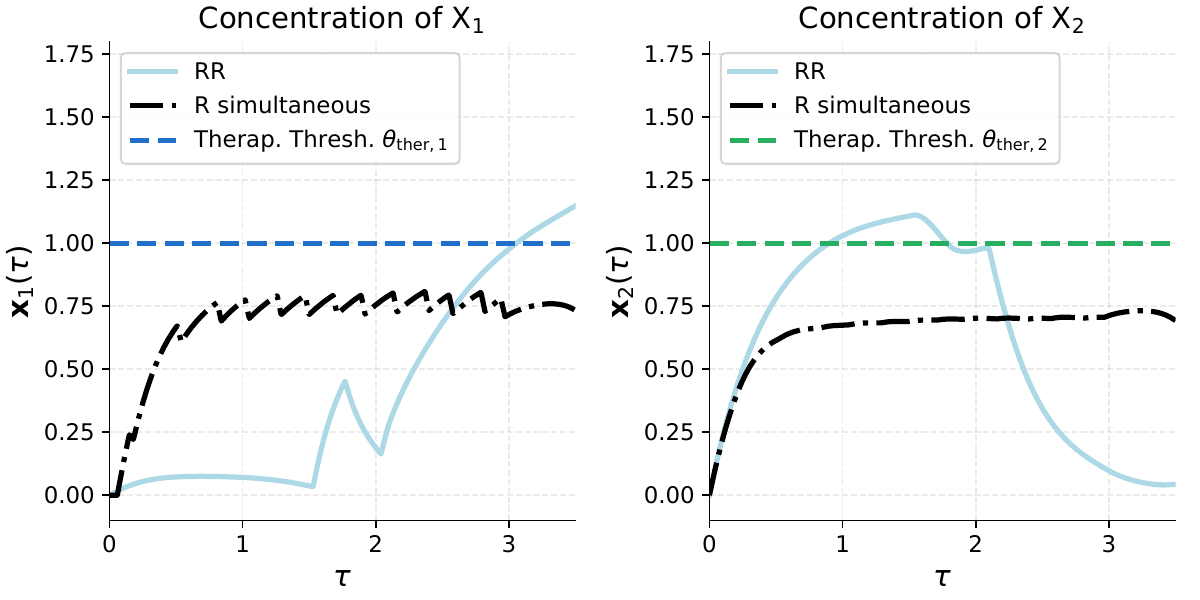}
    \caption{Closed-loop simulation of \eqref{eqn:example-2-eq-1}-\eqref{eqn:example-2-eq-3}, under controllers derived using the R (dotted line) and RR (solid line) valued functions.
    Parameters used were $k_b = 2.5$, $k_u = 0.5$, $\gamma_1 = 2$, $\gamma_2 = 2.1$, $\gamma_3 = 1.5$, $\theta_{\mathrm{ther}_1} = \theta_{\mathrm{ther}_2} = 1$, $\ep = 0.5$.
    The target function is $\tar_1(x_1,x_2,x_3,t) = x_1 - \theta_{\mathrm{ther},1}$, $\tar_2(x_1,x_2,x_3,t) = x_2 - \theta_{\mathrm{ther},2}$.
    In the RR case, the optimal control and disturbance are chosen using the gradient of the RR value function $\VRR{\tar_1,\tar_2}$ prior to the first time $s$ for which $\tar_1(\xSig(s),s) \ge \VR{\tar_2}(\xSig(s),s)$ or $\tar_2(\xSig(s),s) \ge \VR{\tar_1}(\xSig(s),s)$ (per the Theorem \ref{theorem:rr-theorem} proof), and is chosen using the reach value function $\VR{\tar_1}$ and $\VR{\tar_2}$ thereafter respectively.
    }
    \label{fig:rr}
\end{figure}
\section{Conclusions}
In this work, we demonstrated that there are certain decompositions of the value functions for the RR and RAA problems that hold in the presence of an adversary.
We also demonstrated applications of these results to optimal dosing problems in pharmacology.
Future work will focus on applying the foundational approach developed for these two tasks for decomposing more general composite value functions, i.e. where the payoff is the signal temporal logic robustness metric for a more general task \cite{donze-and-maler}. 
Namely, we aim to generalize the single-player value decomposition algebra of \cite{Sharpless-RSS}, however, it remains to be shown which rules hold in the adversarial setting.
Additionally, while the proofs of the RR and RAA value decompositions suggest the proper ``switching conditions'' between the controllers for the decomposed value functions, these conditions may produce undesirable results if the value functions are inaccurate, as is common in reinforcement learning.
Much practical and theoretical work will be required to resolve this issue.
\appendix
We will use throughout the standard fact that if the functions $\tar, \obs: \Rn \tms (-\ii,T] \to \R$ are continuous, so are $\VR{\tar}$, $\VA{\obs}$, and $\VRA{\tar,\obs}$ by Assumptions \ref{assumption:regularity} and \ref{assumption:compactness}
\cite{Bardi-Dolcetta-Optimal-Control}.

Note that, with careful handling of the subtle relationship between satisfaction sets and the value function under an adversary (see the remark after Theorem 2.3 in \cite{Cardaliaguet_1996}), one could alternatively adapt the set-based proof in \cite{chen2025controlsynthesismultiplereachavoid} to establish the $(\ge)$ direction of our Theorem \ref{theorem:raa-theorem} proof.
However, the critical step in proving our theorem is the opposite direction, where we synthesize multiple non-anticipative strategies.
The $(\ge)$ direction in the Theorem \ref{theorem:rr-theorem} proof also requires additional techniques, as there is no target ordering.
\begin{proof}[Proof of Theorem \ref{theorem:raa-theorem}]
    Let $\tar$, $\obs$, and $\specialTar$ be as in the hypothesis (recall $\tar$ and $\obs$ are assumed continuous, so $\specialTar$ is as well).
    We fix $x \in \Rn$ and $t \le T$.
    First observe that
    \begin{align*}
        &\VRAA{\tar,\obs}(x,T)
         = \min \{\tar(x,T), \obs(x,T)\}
         \\
         & = \min \{\min\{ \tar(x,T), \obs(x,T)\}, \obs(x,T)\} = \VRA{\specialTar,\obs}(x,T),
    \end{align*}
    so it suffices to assume that $t < T$.
    We show \eqref{eqn:raa-theorem} holds.

    \noindent($\le$)
    This direction proceeds by building a near-optimal composite strategy ($\composedStrat$) from a primary strategy ($\firstStrat$) and a family of secondary strategies ($\secondStrat{z,s}$).
    
    Fix $\ep > 0$.
    Select an adversary strategy $\firstStrat \in \dStrats{t}$ such that
    \begin{equation}\label{eqn:raa-le-first-strat}
        \textstyle\supu{t} \JRA{x,t}{\specialTar,\obs}(\uSig, \firstStrat) \le \VRA{\specialTar, \obs}(x,t) + \ep.
    \end{equation}
    For each $z \in \Rn$ and $s \in [t,T)$, choose $\secondStrat{z,s} \in \dStrats{s}$ such that
    \begin{equation}\label{eqn:raa-le-second-strat}
        \textstyle\supu{s} \JA{z,s}{\obs}\lf(\uSig, \secondStrat{z,s} \rg) \le \VA{\obs}(z,s) + \ep.
    \end{equation}
    For each $\uSig \in \uSigs{t}$, we define a set of candidate switch times
    \begin{equation}\label{eqn:raa-le-switch-set}
        \switchSet{\uSig} := \lf\{ s \in [t,T) :  \tarEval{\firstTraj{\uSig}}{s} \ge \VAEval{\obs}{\firstTraj{\uSig}}{s}\rg\},
    \end{equation}
    we set the actual switch time as $\switchTime{\uSig} := \min \switchSet{\uSig}$ if $\switchSet{\uSig} \ne \varnothing$ and $\switchTime{\uSig} := T$ otherwise, and we set the switch state as $\switchState{\uSig} := \firstTraj{\uSig}(\switchTime{\uSig})$.
    We define the composite strategy $\composedStrat \in \dStrats{t}$ by for each $\uSig \in \uSigs{t}$ letting $\composedStrat[\uSig] \in \dSigs{t}$ be given by
    \begin{equation*}
        \composedStrat[\uSig](\ta) := \begin{cases}
            \firstStrat[\uSig](\ta)
            & \ta < \switchTime{\uSig}
            \\
            \secondStrat{\switchState{\uSig}, \switchTime{\uSig}}[\uSig \vert_{ [\switchTime{\uSig}, T)}](\ta) & \ta \ge \switchTime{\uSig}.
        \end{cases}
    \end{equation*}
    
    We show that $\composedStrat$ is indeed non-anticipative, i.e. $\composedStrat \in \dStrats{t}$.
    Suppose $\uSig_1$ and $\uSig_2$ agree a.e. on $[t,b)$ for some $b \in [t,T)$.
    First, assume that $\min\{\switchTime{\uSig_1},\switchTime{\uSig_2}\} \le b$.
    We show $\switchTime{\uSig_1} = \switchTime{\uSig_2}$ and $\switchState{\uSig_1} = \switchState{\uSig_2}$.
    Assume $\switchTime{\uSig_1} \le \switchTime{\uSig_2}$.
    Then $\switchTime{\uSig_1} \le b$, so that 
    \begin{equation}\label{eqn:raa-le-gamma-non-anticipative}
        \firstTraj{\uSig_1}(\switchTime{\uSig_1}) = \firstTraj{\uSig_2}(\switchTime{\uSig_1})
    \end{equation} by non-anticipativity of $\firstStrat$.
    Thus, because  $\switchTime{\uSig_1} \in \switchSet{\uSig_1}$ (as $\switchTime{\uSig_1} \le b < T$), we also have $\switchTime{\uSig_1} \in \switchSet{\uSig_2}$.
    But since $\switchTime{\uSig_2}$ is the minimal element of $\switchSet{\uSig_2}$ and $\switchTime{\uSig_1} \le \switchTime{\uSig_2}$, we indeed have  $\switchTime{\uSig_1} = \switchTime{\uSig_2}$, and also
    $\switchState{\uSig_1} = \switchState{\uSig_2}$ by \eqref{eqn:raa-le-gamma-non-anticipative}.
    The same result follows if instead $\switchTime{\uSig_2} \le \switchTime{\uSig_1}$.
    
    Thus, either $\switchTime{\uSig_1} > b$ and $\switchTime{\uSig_2} > b$, in which case $\composedStrat[\uSig_1]$ and $\composedStrat[\uSig_2]$ agree a.e. on $[t,b)$ by non-anticipativity of $\firstStrat$, or $\switchTime{\uSig_1} = \switchTime{\uSig_2} \le b$ and $\switchState{\uSig_1} = \switchState{\uSig_2}$, in which case $\composedStrat[\uSig_1]$ and $\composedStrat[\uSig_2]$ agree a.e. on $[t, \switchTime{\uSig_1})$ by non-anticipativity of $\firstStrat$ and also agree a.e. on $[\switchTime{\uSig_1},b)$ by non-anticipativity of $\secondStrat{\switchState{\uSig_1}, \switchTime{\uSig_1}} = \secondStrat{\switchState{\uSig_2}, \switchTime{\uSig_2}}$.
    It follows that $\composedStrat \in \dStrats{t}$.

    Next, choose a control signal $\vSig \in \uSigs{t}$ such that
    \begin{equation}\label{eqn:raa-le-vsig}
        \textstyle\JRAA{x,t}{\tar,\obs}(\vSig, \composedStrat) \ge \supu{t} \JRAA{x,t}{\tar,\obs}(\uSig, \composedStrat) - \ep.
    \end{equation}
    For convenience, let $s := \switchTime{\vSig}$ and $z := \switchState{\vSig}$.
    By definition,
    \begin{equation}\label{eqn:raa-le-z}
        z = \firstTraj{\vSig}(s).
    \end{equation}

    There are two cases: $s < T$ and $s = T$.
    First suppose that $s < T$.
    Since $s \in \switchSet{\vSig}$ by definition, we have from \eqref{eqn:raa-le-switch-set} that $\tar(z,s) \ge \VA{\obs}(z,s)$,
    which implies
    \begin{equation}\label{eqn:raa-le-from-switching-condition}
        \specialTar(z,s) = \VA{\obs}(z,s).
    \end{equation}
    Let $\bar{\vSig} := \vSig \vert_{[s,T)}$.
    Then for each $\ta \in [t,T]$ we have,
    \begin{equation*}
        \composedStrat[\vSig](\ta) = \begin{cases}
            \firstStrat[\vSig](\ta) & \ta < s, \\
            \secondStrat{z,s}[\bar{\vSig}](\ta) & \ta \ge s.
        \end{cases}
    \end{equation*}
    Observe that for all $\ta \le s$,
    \begin{equation}\label{eqn:raa-le-gamma-v-trajectory-before-s}
        \composedTraj{\vSig}(\ta) = \firstTraj{\vSig}(\ta),
    \end{equation}
    and for all $\ta \ge s$,
    \begin{equation}\label{eqn:raa-le-gamma-v-trajectory-after-s}
        \composedTraj{\vSig}(\ta) = \secondTraj{\bar{\vSig}}(\ta).
    \end{equation}
    It follows that
    \begin{align*}
        &\VRAA{\tar,\obs}(x,t) \le  \JRAA{x,t}{\tar, \obs}(\vSig, \composedStrat) + \ep \by{\eqref{eqn:raa-le-vsig}}
        \\
        &\le \JA{x,t}{\obs}(\vSig, \composedStrat) + \ep
        \\
        &= \min \lf\{
        \min_{\ka \in [t,s]} \obsEval{\firstTraj{\vSig}}{\ka}, \JA{z,s}{\obs}(\bar{\vSig}, \secondStrat{z,s}) 
        \rg\} + \ep
        \by{\eqref{eqn:raa-le-gamma-v-trajectory-before-s},\eqref{eqn:raa-le-gamma-v-trajectory-after-s}}
        \\
        &\le \min \lf\{
        \min_{\ka \in [t,s]} \obsEval{\firstTraj{\vSig}}{\ka}, \VA{\obs}(z,s) 
        \rg\} + 2\ep \by{\eqref{eqn:raa-le-second-strat}}
        \\
        &= \min \lf\{ \specialTar(z,s), \min_{\ka \in [t,s]} \obsEval{\firstTraj{\vSig}}{\ka} \rg\} + 2\ep
        \by{\eqref{eqn:raa-le-from-switching-condition}}
        \\
        &\le \JRA{x,t}{\specialTar,\obs}(\vSig, \firstStrat) + 2\ep \by{\eqref{eqn:raa-le-z}}
        \\
        &\le \VRA{\specialTar, \obs}(x,t) + 3\ep
        \by{\eqref{eqn:raa-le-first-strat}}.
    \end{align*}

    Now instead suppose that $s = T$.
    Then $\switchSet{\vSig} = \varnothing$ by definition of $s$, so that by \eqref{eqn:raa-le-switch-set}, for all $\ta \in [t,T]$ we have $\tarEval{\firstTraj{\vSig}}{\ta} \le \VAEval{\obs}{\firstTraj{\vSig}}{\ta}$. Letting
    $\ta^* \in \argmax_{\ta \in [t,T]} \tarEval{\firstTraj{\vSig}}{\ta}$,
    we in particular have
    $\tarEval{\firstTraj{\vSig}}{\ta^*} \le \VAEval{\obs}{\firstTraj{\vSig}}{\ta^*}$, so that
    \begin{equation}\label{eqn:raa-le-T-from-switching-condition}
        \tarEval{\firstTraj{\vSig}}{\ta^*} = \specialTarEval{\firstTraj{\vSig}}{\ta^*}.
    \end{equation}
    Also, by definition of $\ta^*$,
    \begin{equation}\label{eqn:raa-le-T-tau-star}
    \textstyle\tarEval{\firstTraj{\vSig}}{\ta^*} = \max_{\ta \in [t,T]} \tarEval{\firstTraj{\vSig}}{\ta}.
    \end{equation}
    In addition, by our construction of $\composedStrat$, we have
    \begin{equation}\label{eqn:raa-le-gamma-is-alpha}
        \composedStrat = \firstStrat.
    \end{equation}
    It thus follows that
    \begin{align*}
        &\VRAA{\tar,\obs}(x,t)
        \le \JRAA{x,t}{\tar, \obs}(\vSig, \composedStrat) + \ep \by{\eqref{eqn:raa-le-vsig}}
        \\
        &= \JRAA{x,t}{\tar, \obs}(\vSig, \firstStrat) + \ep \by{\eqref{eqn:raa-le-gamma-is-alpha}}                              \\
        &= \min \lf\{ \tarEval{\firstTraj{\vSig}}{\ta^*}, \min_{\ka \in [t,T]} \obsEval{\firstTraj{\vSig}}{\ka} \rg\} + \ep
        \by{\eqref{eqn:raa-le-T-tau-star}}
        \\
        &\le \min \lf\{ \specialTarEval{\firstTraj{\vSig}}{\ta^*}, \min_{\ka \in [t,\ta^*]} \obsEval{\firstTraj{\vSig}}{\ka} \rg\} + \ep
        \by{\eqref{eqn:raa-le-T-from-switching-condition}}
        \\
        &\le \JRA{x,t}{\specialTar,\obs}(\vSig,\firstStrat) + \ep
        \\
        &\le \VRA{\specialTar,\obs}(x,t) + 2\ep 
        \by{\eqref{eqn:raa-le-first-strat}}.
    \end{align*}

    \noindent($\ge$) In this direction, we build a near-optimal control signal ($\wSig$) from a primary signal ($\vSig$) and a secondary signal ($\bar{\vSig}$) for responding to a near-optimal adversary strategy ($\firstStrat$).
    
    Fix $\ep > 0$.
    First, choose $\firstStrat \in \dStrats{t}$ such that
    \begin{equation}\label{eqn:raa-ge-first-strat}
        \textstyle\supu{t} \JRAA{x,t}{\tar,\obs}(\uSig, \firstStrat) \le \VRAA{\tar,\obs}(x,t) + \ep.
    \end{equation}
    Next, choose $\vSig \in \uSigs{t}$ and $s \in [t,T]$ such that
    \begin{align}\label{eqn:raa-ge-v-and-s}
        \min&\textstyle\lf\{ \specialTarEval{\firstTraj{\vSig}}{s}, \min_{\ka \in [t,s]} \obsEval{\firstTraj{\vSig}}{\ka} \rg\} >
        \\
        &\textstyle\supu{t} \JRA{x,t}{\specialTar,\obs}(\uSig, \firstStrat) - \ep.\nonumber
    \end{align}
    Because the left-hand-side of \eqref{eqn:raa-ge-v-and-s} is continuous in $s$ and the inequality is strict, it suffices to assume $s < T$.
    Let
    \begin{equation}\label{eqn:raa-ge-z}
        z := \firstTraj{\vSig}(s).
    \end{equation}

    For each $\uSig \in \uSigs{s}$, let $\uSig_\vSig \in \uSigs{t}$ be given by
    \begin{equation*}
        \uSig_\vSig(\ta) := \begin{cases}
            \vSig(\ta) & \ta < s,   \\
            \uSig(\ta) & s \le \ta.
        \end{cases}
    \end{equation*}
    Define the adversary strategy $\bar{\firstStrat}: \uSigs{s} \to \dSigs{s}$ by $\bar{\firstStrat}[\uSig] := \firstStrat[\uSig_\vSig]|_{[s,T)}$.
    Because $\firstStrat$ is non-anticipative, so is $\bar{\firstStrat}$, i.e. $\bar{\firstStrat} \in \dStrats{s}$.
    Finally, select $\bar{\vSig} \in \uSigs{s}$ such that
    \begin{equation}\label{eqn:raa-ge-vbar}
        \textstyle\JA{z,s}{\obs}(\bar{\vSig}, \bar{\firstStrat}) \ge \supu{s} \JA{z,s}{\obs}(\uSig, \bar{\firstStrat}) - \ep.
    \end{equation}
    Define the control signal $\wSig \in \uSigs{t}$ by
    \begin{equation*}
        \wSig(\ta) := \begin{cases}
            \vSig(\ta)       & \ta < s    \\
            \bar{\vSig}(\ta) & s \le \ta.
        \end{cases}
    \end{equation*}
    We then have $\bar{\firstStrat}[\bar{\vSig}] = \firstStrat[\wSig] \vert_{[s,T)}$.
    By \eqref{eqn:raa-ge-z},
    \begin{equation}\label{eqn:raa-ge-z-in-terms-of-w}
        z = \firstTraj{\wSig}(s).
    \end{equation}
    Also note that for all $\ta \in [s,T]$, we have
    \begin{equation}\label{eqn:raa-ge-second-trajectory}
        \barTraj(\ta) = \firstTraj{\wSig}(\ta).
    \end{equation}
    It follows that
    \begin{align*}
         &\VRAA{\tar,\obs}(x,t)
         \ge \JRAA{x,t}{\tar,\obs}(\wSig, \firstStrat) - \ep \by{\eqref{eqn:raa-ge-first-strat}}
         \\
         &\ge \min\lf\{ \tarEval{\firstTraj{\wSig}}{s}, \min_{\ta \in [t,T]} \obsEval{\firstTraj{\wSig}}{\ta} \rg\} - \ep
         \\
         &= \min\lf\{ \tar(z,s), \min_{\ta \in [t,s]} \obsEval{\firstTraj{\vSig}}{\ta}, \JA{z,s}{\obs}(\bar{\vSig}, \bar{\firstStrat}) \rg\} - \ep
         \by{\eqref{eqn:raa-ge-z-in-terms-of-w}, \eqref{eqn:raa-ge-second-trajectory}}
         \\
         &\ge \min\lf\{ \tar(z,s), \min_{\ta \in [t,s]} \obsEval{\firstTraj{\vSig}}{\ta}, \VA{\obs}(z,s)\rg\} - 2\ep \by{\eqref{eqn:raa-ge-vbar}}
         \\
         &= \min\lf\{ \specialTar(z,s), \min_{\ka \in [t,s]} \obsEval{\firstTraj{\vSig}}{\ka} \rg\} - 2\ep
         \\
         &\ge \VRA{\specialTar,\obs}(x,t) - 3\ep.
        \by{\eqref{eqn:raa-ge-v-and-s}, \eqref{eqn:raa-ge-z}}
    \end{align*}
\end{proof}

\begin{lemma}\label{lemma:easy-lemma}
Let $a,b,c,d,e \in \R$, with $a \ge c$.
Then
$$\min\lf\{\max\{a,e\}, \max\{c,b\}\rg\} 
\ge \max\lf\{\min\{a,b\}, \min\{c,d\}\rg\}.$$
\end{lemma}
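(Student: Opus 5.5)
To prove the inequality, it suffices to show that each of the two terms inside the outer $\max$ on the right-hand side is bounded above by both terms inside the outer $\min$ on the left-hand side. That is, the plan is to verify the four elementary inequalities
\begin{align*}
\min\{a,b\} &\le \max\{a,e\}, & \min\{a,b\} &\le \max\{c,b\}, \\
\min\{c,d\} &\le \max\{a,e\}, & \min\{c,d\} &\le \max\{c,b\}.
\end{align*}
These then combine via the lattice facts that $\max\{p,q\} \le m$ iff $p \le m$ and $q \le m$, and that $p \le \min\{m_1,m_2\}$ iff $p \le m_1$ and $p \le m_2$.

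Three of the four inequalities are immediate chains through a shared variable:
\begin{align*}
\min\{a,b\} &\le a \le \max\{a,e\}, \\
\min\{a,b\} &\le b \le \max\{c,b\}, \\
\min\{c,d\} &\le c \le \max\{c,b\}.
\end{align*}
The only one that uses the hypothesis is the cross term $\min\{c,d\} \le \max\{a,e\}$. Here I will chain $\min\{c,d\} \le c \le a \le \max\{a,e\}$, where the middle step is exactly the assumption $a \ge c$.

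There is no genuine obstacle. The one point to watch is identifying that the cross term $\min\{c,d\}$ versus $\max\{a,e\}$ is where $a \ge c$ must enter; otherwise the argument is routine order manipulation, and $d$ and $e$ play no role beyond being dropped.
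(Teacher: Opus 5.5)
Your proof is correct and takes essentially the same approach as the paper. The paper verifies the same four pairwise inequalities through the same chains, and it uses $a \ge c$ only in $\min\{c,d\} \le c \le a \le \max\{a,e\}$, just as you do.
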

\begin{proof}
    Observe that $\max\{a,e\} \ge a \ge \min\{a,b\}$, $\max\{a,e\} \ge a \ge c \ge \min\{c,d\}$, $\max\{c,b\} \ge b \ge \min\{a,b\}$, and $\max\{c,b\} \ge c \ge \min\{c,d\}$.
\end{proof}

\begin{proof}[Proof of Theorem \ref{theorem:rr-theorem}]
    Let $\prm$, $\alt$, and $\specialTar$ be as in the hypothesis (recall $\prm$ and $\alt$ are assumed continuous, so $\specialTar$ is as well).
    We fix $x \in \Rn$ and $t \le T$.
    First observe that
    \begin{align*}
         &\VRR{\prm,\alt}(x,T)
         = \min \{\prm(x,T), \alt(x,T)\}
         \\
         &= \max \{\min\{ \prm(x,T), \alt(x,T)\}, \min\{ \alt(x,T), \prm(x,T)\}\}
         \\
         &= \VR{\specialTar}(x,T),
    \end{align*}
    so it suffices to assume that $t < T$.
    We show \eqref{eqn:rr-theorem} holds.
    
    \noindent($\le$)
    Fix $\ep > 0$.
    Select $\firstStrat \in \dStrats{t}$ such that
	\begin{equation}\label{eqn:rr-le-alpha}
        \textstyle\supu{t} \JR{x,t}{\specialTar}(\uSig, \firstStrat) \le \VR{\specialTar}(x,t) + \ep.
    \end{equation}
    For each $z \in \Rn$ and $s \in [t,T)$, choose $\prmStrat{z,s}, \altStrat{z,s} \in \dStrats{s}$ s.t.
    \begin{align}
        &\textstyle\supu{s} \JR{z,s}{\prm}\lf(\uSig, \prmStrat{z,s} \rg) \le \VR{\prm}(z,s) + \ep, 
        \nonumber
        \\
        &\textstyle\supu{s} \JR{z,s}{\alt}\lf(\uSig, \altStrat{z,s} \rg) \le \VR{\alt}(z,s) + \ep.\label{eqn:rr-le-eta}
    \end{align}

    For each $\uSig \in \uSigs{t}$, define the switch sets
    \begin{align}
        \prmSwitchSet{\uSig}    
	& := \lf\{ s \in [t,T) :  \prmEval{\firstTraj{\uSig}}{s} \ge \VREval{\alt}{\firstTraj{\uSig}}{s} \rg\}, 
	\label{eqn:rr-le-s-switch-set} \\
        \altSwitchSet{\uSig} 
	& := \lf\{ s \in [t,T) : \altEval{\firstTraj{\uSig}}{s} \ge \VREval{\prm}{\firstTraj{\uSig}}{s} \rg\},
	\label{eqn:rr-le-r-switch-set}
    \end{align}
    set $\prmSwitchTime{\uSig} := \min \prmSwitchSet{\uSig}$ if $\prmSwitchSet{\uSig} \ne \varnothing$ and $\prmSwitchTime{\uSig} := T$ otherwise, set $\altSwitchTime{\uSig} := \min \altSwitchSet{\uSig}$ if $\altSwitchSet{\uSig} \ne \varnothing$ and $\altSwitchTime{\uSig} := T$ otherwise, and set $\prmSwitchState{\uSig} := \firstTraj{\uSig}(\prmSwitchTime{\uSig})$ and $\altSwitchState{\uSig} := \firstTraj{\uSig}(\altSwitchTime{\uSig})$.

    We define the composite strategy $\composedStrat \in \dStrats{t}$ by for each $\uSig \in \uSigs{t}$ letting $\composedStrat[\uSig] \in \dSigs{t}$ be given by
	\begin{equation}\label{eqn:rr-le-gamma}
        \composedStrat[\uSig](\ta) := \begin{cases}
            \firstStrat[\uSig](\ta) & \ta < \min\{\prmSwitchTime{\uSig}, \altSwitchTime{\uSig}\} \\
            \altStrat{\prmSwitchState{\uSig}, \prmSwitchTime{\uSig}}[\uSig \vert_{ [\prmSwitchTime{\uSig}, T)}](\ta) & \ta \ge  \prmSwitchTime{\uSig}, \altSwitchTime{\uSig} \ge \prmSwitchTime{\uSig} \\
            \prmStrat{\altSwitchState{\uSig}, \altSwitchTime{\uSig}}[\uSig \vert_{ [\altSwitchTime{\uSig}, T)}](\ta) & \ta \ge \altSwitchTime{\uSig}, \prmSwitchTime{\uSig} > \altSwitchTime{\uSig}.
        \end{cases}
    \end{equation}
    The claim that $\composedStrat$ is non-anticipative, i.e. $\composedStrat \in \dStrats{t}$, is shown similarly to the proof of Theorem \ref{theorem:raa-theorem}; we provide an outline for space.
    Suppose $\uSig_1, \uSig_2 \in \uSigs{t}$ agree a.e. on $[t,b) \sbs [t,T]$.
    The same argument as before shows that if $\min\{\prmSwitchTime{\uSig_1}, \prmSwitchTime{\uSig_2} \} \le b$ then $\prmSwitchTime{\uSig_1} = \prmSwitchTime{\uSig_2}$ and $\prmSwitchState{\uSig_1} = \prmSwitchState{\uSig_2}$, and the analogous result also holds for $\altSwitchTime{\uSig_1}$, $\altSwitchTime{\uSig_2}$, $\altSwitchState{\uSig_1}$,
    $\altSwitchState{\uSig_2}$.
    Thus if $b \le \min\{\prmSwitchTime{\uSig_1}, \altSwitchTime{\uSig_1}\}$, the claim follows by non-anticipativity of $\firstStrat$, so assume otherwise.
    If $\prmSwitchTime{\uSig_1} \le \altSwitchTime{\uSig_1}$, the claim follows from non-anticipativity of $\firstStrat$ and $\altStrat{\prmSwitchState{\uSig_1}, \prmSwitchTime{\uSig_1}}$.
    Otherwise, it follows from non-anticipativity of $\firstStrat$ and $\prmStrat{\altSwitchState{\uSig_1}, \altSwitchTime{\uSig_1}}$.
    
    Now, choose a control signal $\vSig \in \uSigs{t}$ such that
	\begin{equation}\label{eqn:rr-le-v}
        \textstyle\JRR{x,t}{\prm,\alt}(\vSig, \composedStrat) \ge \supu{t} \JRR{x,t}{\prm,\alt}(\uSig, \composedStrat) - \ep.
    \end{equation}
    Let $s_1 := \prmSwitchTime{\vSig}$, $s_2 := \altSwitchTime{\vSig}$, and
	\begin{equation}\label{eqn:rr-le-z}
        z_1 := \prmSwitchState{\vSig} = \firstTraj{\vSig}(s_1),\quad z_2 := \altSwitchState{\vSig} = \firstTraj{\vSig}(s_2).
    \end{equation}

    First assume $s_1 \le s_2$.
    There are two subcases: $s_1 \in [t,T)$ and $s_1 = T$.
    First suppose that $s_1 \in [t,T)$.
    Then by definition of $s_1$ and $s_2$, from \eqref{eqn:rr-le-s-switch-set} and \eqref{eqn:rr-le-r-switch-set},
    for all $\tau \in [t,s_1)$ we have
    \begin{align}
	    \prmEval{\firstTraj{\vSig}}{\ta} & < \VREval{\alt}{\firstTraj{\vSig}}{\ta},
	    \label{eqn:rr-le-switching-condition-l-and-Vh} \\
	    \altEval{\firstTraj{\vSig}}{\ta} & < \VREval{\prm}{\firstTraj{\vSig}}{\ta},
	    \label{eqn:rr-le-switching-condition-h-and-Vl}
    \end{align}
    and also since $s_1 \in \prmSwitchSet{\vSig}$, we have by $\eqref{eqn:rr-le-s-switch-set}$
    \begin{equation}\label{eqn:rr-le-from-switching-condition}
        \prm(z_1,s_1) \ge \VR{\alt}(z_1,s_1).
    \end{equation}
    Let $\bar{\vSig} := \vSig \vert_{[s_1,T)}$.
    By \eqref{eqn:rr-le-gamma}, \eqref{eqn:rr-le-z}, and the assumption $s_1 \le s_2$, for all $\ta \in [t,T]$ we have
    \begin{equation*}
	    \composedStrat[\vSig](\ta) = \begin{cases}
		    \firstStrat[\vSig](\ta) & \ta < s_1,\\
		    \altStrat{z_1,s_1}[\bar{\vSig}](\ta) & \ta \ge s_1.\\
	    \end{cases}
    \end{equation*}
    Then for all $\ta \in [t,s_1]$, 
    \begin{equation}\label{eqn:rr-le-traj-before-s}
	    \composedTraj{\vSig}(\ta) = \firstTraj{\vSig}(\ta),
    \end{equation}
    and for all $\ta \in [s_1,T]$,
    \begin{equation}\label{eqn:rr-le-traj-after-s}
	    \composedTraj{\vSig}(\ta) = \traj{z_1,s_1}{\bar{\vSig}, \altStrat{z_1,s_1}}(\ta).
    \end{equation}
    Thus
    \begin{align*}
        &\VRR{\prm,\alt}(x,t)
        \le \JRR{x,t}{\prm, \alt}(\vSig, \composedStrat) + \ep
	    \by{\eqref{eqn:rr-le-v}}
        \\
        &\le \JR{x,t}{\alt}(\vSig, \composedStrat) + \ep
        \\
        &=\max\lf\{
        \max_{\ta \in [t,s_1]} \altEval{\firstTraj{\vSig}}{\ta},
        \JR{z_1,s_1}{\alt}(\bar{\vSig}, \altStrat{z_1,s_1})
        \rg\} + \ep   
	    \by{\eqref{eqn:rr-le-traj-before-s},\eqref{eqn:rr-le-traj-after-s}}
        \\
        &\le
        \max\lf\{
        \max_{\ta \in [t,s_1]} \altEval{\firstTraj{\vSig}}{\ta},
        \VR{\alt}(z_1,s_1)
	    \rg\} + 2\ep \by{\eqref{eqn:rr-le-eta}}
        \\
        &\le \max \lf\{
        \max_{\ta \in [t,s_1]} \altEval{\firstTraj{\vSig}}{\ta},
        \max_{\ta \in [t,s_1]} \prmEval{\firstTraj{\vSig}}{\ta}
        \rg\} \\
        & \quad + 2\ep     
        \by{\eqref{eqn:rr-le-z},\eqref{eqn:rr-le-from-switching-condition}}
        \\ 
        &= \max_{\ta \in [t,s_1]} \max \lf\{
        \prmEval{\firstTraj{\vSig}}{\ta},
        \altEval{\firstTraj{\vSig}}{\ta}
        \rg\} + 2\ep
        \\
        &= \max_{\ta \in [t,s_1]} \max \big\{
        \min\{\prmEval{\firstTraj{\vSig}}{\ta}, \VREval{\alt}{\firstTraj{\vSig}}{\ta}\},
        \\
        &\qquad\qquad\min\{\altEval{\firstTraj{\vSig}}{\ta}, \VREval{\prm}{\firstTraj{\vSig}}{\ta}\}
        \big\} + 2\ep                                             
        \by{\eqref{eqn:rr-le-switching-condition-l-and-Vh},\eqref{eqn:rr-le-switching-condition-h-and-Vl}}
        \\
        &\le \JR{x,t}{\specialTar}(\vSig, \firstStrat) + 2\ep
        \\
	    &\le \VR{\specialTar}(x,t) + 3\ep. \by{\eqref{eqn:rr-le-alpha}}
    \end{align*}

    Instead, suppose that $s_1 = T$.
    By the assumption that $s_1 \le s_2$, we have $s_2 = T$ as well.
    By \eqref{eqn:rr-le-gamma}, we have
    \begin{equation}\label{eqn:rr-le-T-alpha-is-gamma}
	    \composedStrat = \firstStrat.
    \end{equation}
    Moreover, by definition of $s_1$ and $s_2$ it follows that $\prmSwitchSet{\vSig} = \altSwitchSet{\vSig} = \varnothing$,
    so that by \eqref{eqn:rr-le-s-switch-set} and \eqref{eqn:rr-le-r-switch-set}, for all $\ta \in [t,T]$ we have
    \begin{align}
	    \prmEval{\firstTraj{\vSig}}{\ta} & \le \VREval{\alt}{\firstTraj{\vSig}}{\ta},
	    \label{eqn:rr-le-T-switching-condition-l-Vh} \\
        \altEval{\firstTraj{\vSig}}{\ta} & \le \VREval{\prm}{\firstTraj{\vSig}}{\ta}
	    \label{eqn:rr-le-T-switching-condition-h-Vl}.
    \end{align}
    Thus
    \begin{align*}
        &\VRR{\prm,\alt}(x,t)
        \le \JRR{x,t}{\prm, \alt}(\vSig, \composedStrat) + \ep
	    \by{\eqref{eqn:rr-le-v}}
        \\
	    &= \JRR{x,t}{\prm, \alt}(\vSig, \firstStrat) + \ep
	    \by{\eqref{eqn:rr-le-T-alpha-is-gamma}}
        \\
        &\le \max_{\ta \in [t,T]} \max \lf\{
        \prmEval{\firstTraj{\vSig}}{\ta},
        \altEval{\firstTraj{\vSig}}{\ta}
        \rg\} + \ep
        \\
        &= \max_{\ta \in [t,T]} \max \big\{
        \min\{\prmEval{\firstTraj{\vSig}}{\ta}, \VREval{\alt}{\firstTraj{\vSig}}{\ta}\},
        \\
        & \qquad\qquad\quad\min\{\altEval{\firstTraj{\vSig}}{\ta}, \VREval{\prm}{\firstTraj{\vSig}}{\ta}\}
        \big\} + \ep                                                   
        \by{\eqref{eqn:rr-le-T-switching-condition-l-Vh},\eqref{eqn:rr-le-T-switching-condition-h-Vl}}
        \\
        &= \JR{x,t}{\specialTar}(\vSig, \firstStrat) + \ep 
	    \by{\eqref{eqn:rr-le-alpha}}
        \\
        &\le \VR{\specialTar}(x,t) + 2\ep.
    \end{align*}

    In the case $s_1 > s_2$, we only must consider the possibility that $s_2 \in [t,T)$.
    This argument proceeds identically to the $s_1 \in [t,T)$ subcase of the $s_1 \le s_2$ case, with the variables $s_1,s_2,z_1,z_2,\prm,\alt,\altStrat{z_1,s_1}$ switched with $s_2,s_1,z_2,z_1,\alt,\prm,\prmStrat{z_2,s_2}$, respectively.

    \noindent($\ge$)
    Fix $\ep > 0$.
    First, choose $\firstStrat \in \dStrats{t}$ such that
    \begin{equation}\label{eqn:rr-ge-alpha}
        \textstyle \supu{t} \JRR{x,t}{\prm,\alt}(\uSig, \firstStrat) \le \VRR{\prm,\alt}(x,t) + \ep.
    \end{equation}
    Next, choose $\vSig \in \uSigs{t}$ and $s \in [t,T]$ such that
    \begin{equation}\label{eqn:rr-ge-v-and-s}
        \textstyle \specialTarEval{\firstTraj{\vSig}}{s} > \supu{t} \JR{x,t}{\specialTar}(\uSig, \firstStrat) - \ep.
    \end{equation}
    Since the left-hand-side of $\eqref{eqn:rr-ge-v-and-s}$ is continuous in $s$ and the inequality is strict, it suffices to assume $s < T$.
    Let
    \begin{equation}\label{eqn:rr-ge-z}
        z := \firstTraj{\vSig}(s).
    \end{equation}

    For each $\uSig \in \uSigs{s}$, let $\uSig_\vSig \in \uSigs{t}$ be given by
    \begin{equation*}
        \uSig_\vSig(\ta) := \begin{cases}
            \vSig(\ta) & \ta < s,   \\
            \uSig(\ta) & s \le \ta.
        \end{cases}
    \end{equation*}
    Define the adversary strategy $\bar{\firstStrat}: \uSigs{s} \to \dSigs{s}$ by $\bar{\firstStrat}[\uSig] = \firstStrat[\uSig_\vSig]|_{[s,T)}$.
    Because $\firstStrat$ is non-anticipative, so is $\bar{\firstStrat}$, i.e. $\bar{\firstStrat} \in \dStrats{s}$.
    Without loss of generality, suppose that
    \begin{equation}\label{eqn:rr-ge-l-ge-h}
        \prm(z,s) \ge \alt(z,s).
    \end{equation}
    Select $\bar{\vSig} \in \uSigs{s}$ such that
    \begin{equation}\label{eqn:rr-ge-vbar}
        \textstyle\JR{z,s}{\alt}(\bar{\vSig}, \bar{\firstStrat}) \ge \supu{s} \JR{z,s}{\alt}(\uSig, \bar{\firstStrat}) - \ep.
    \end{equation}
    Define $\wSig \in \uSigs{t}$ by
    \begin{equation*}
        \wSig(\ta) = \begin{cases}
            \vSig(\ta)       & \ta < s,    \\
            \bar{\vSig}(\ta) & s \le \ta.
        \end{cases}
    \end{equation*}
    Note that we have $\bar{\firstStrat}[\bar{\vSig}] = \firstStrat[\wSig] \vert_{[s,T)}$
    and for all $\ta \in [t,s]$,
    \begin{equation}\label{eqn:rr-ge-trajectory-before-s}
	    \firstTraj{\wSig}(\ta) = \firstTraj{\vSig}(\ta).
    \end{equation}
    Thus by $\eqref{eqn:rr-ge-z}$, we have $z = \firstTraj{\wSig}(s)$.
    But then it follows that for all $\ta \in [s,T]$, we have
    \begin{equation}\label{eqn:rr-ge-trajectory-after-s}
        \firstTraj{\wSig}(\ta) = \barTraj(\ta).
    \end{equation}
    Thus
    \begin{align*}
        &\VRR{\prm,\alt}(x,t)
	    \ge \JRR{x,t}{\prm,\alt}(\wSig,\firstStrat) - \ep \by{\eqref{eqn:rr-ge-alpha}}
        \\
        & = \min\bigg\{
        \max\lf\{
        \max_{\ta \in [t,s]} \prmEval{\firstTraj{\vSig}}{\ta},
        \JR{z,s}{\prm}(\bar{\vSig}, \bar{\firstStrat})
        \rg\},
        \\
        & \qquad\qquad\max\lf\{
        \max_{\ta \in [t,s]} \altEval{\firstTraj{\vSig}}{\ta},
        \JR{z,s}{\alt}(\bar{\vSig}, \bar{\firstStrat})
        \rg\}
        \bigg\} - \ep      
        \by{\eqref{eqn:rr-ge-trajectory-before-s},\eqref{eqn:rr-ge-trajectory-after-s}}\\
        & \ge \min\big\{
        \max\lf\{
        \prm(z,s),
        \JR{z,s}{\prm}(\bar{\vSig}, \bar{\firstStrat})
        \rg\},
        \\
        &\qquad\quad~~ \max\lf\{\alt(z,s),
        \JR{z,s}{\alt}(\bar{\vSig}, \bar{\firstStrat})
        \rg\}
        \big\} - \ep
	    \by{\eqref{eqn:rr-ge-z}}
        \\
        & \ge \min\big\{
        \max\lf\{
        \prm(z,s),
        \JR{z,s}{\prm}(\bar{\vSig}, \bar{\firstStrat})
        \rg\},
        \\
        &\qquad\quad~~ \max\lf\{\alt(z,s),
        \VR{\alt}(z,s)
        \rg\}
        \big\} - 2\ep         
	    \by{\eqref{eqn:rr-ge-vbar}}
        \\
        & \ge \max\big\{
        \min\lf\{
        \prm(z,s),
        \VR{\alt}(z,s)
        \rg\},
        \\
        &\qquad\quad~~ \min\lf\{\alt(z,s),
        \VR{\prm}(z,s)
        \rg\}
        \big\}- 2\ep
	    \by{Lemma \ref{lemma:easy-lemma} and \eqref{eqn:rr-ge-l-ge-h}}
        \\
        & = \specialTar(z,s) - 2\ep
        \\
        & \ge \VR{\specialTar}(x,t) - 3\ep. \by{\eqref{eqn:rr-ge-v-and-s}}
    \end{align*}

\end{proof}

\bibliographystyle{ieeetr}
\bibliography{references}

\end{document}